\documentclass[letterpaper, 10 pt, conference]{ieeeconf}  
\IEEEoverridecommandlockouts                              
\title{\LARGE \bf On Improved Statistical Accuracy of Low-Order Polynomial Chaos Approximations \thanks{\textbf{Funding:} This work was funded by the National Science Foundation (NSF) under contract no.~1762825.}}

\author{Vedang M. Deshpande$^{1}$ and Raktim Bhattacharya$^{2}$
\thanks{$^{1}$Research/Technical Staff, Mitsubishi Electric Research Laboratories, Cambridge, MA, USA. This work was completed during doctoral studies at Texas~A\&M~University. Email: \texttt{deshpande@merl.com}}
\thanks{$^{2}$Professor, Aerospace Engineering, Texas A\&M University. Email: \texttt{raktim@tamu.edu}.}
}
\usepackage{lipsum}
\usepackage{amsfonts}
\usepackage{graphicx}
\usepackage{epstopdf}
\usepackage{algorithm}
\usepackage{algpseudocode}
\usepackage{fullpage}
\usepackage[none]{hyphenat}

\ifpdf
  \DeclareGraphicsExtensions{.eps,.pdf,.png,.jpg}
\else
  \DeclareGraphicsExtensions{.eps}
\fi

\usepackage{amsopn}
\DeclareMathOperator{\diag}{diag}

\usepackage{subcaption,amsthm,amssymb}
\usepackage{amsmath}

\newtheorem{theorem}{Theorem}

\newtheorem{remark}[theorem]{Remark}

\newcommand{\real}{\mathbb{R}}
\newcommand{\vo}[1]{\boldsymbol{#1}}
\newcommand{\mo}[1]{\boldsymbol{#1}}

\newcommand{\x}{\vo{x}}
\newcommand{\F}{\vo{F}}
\newcommand{\EE}{\vo{E}}
\renewcommand{\L}{\vo{L}}

\newcommand{\param}{\vo{\Delta}}
\newcommand{\fhat}{\vo{\hat{f}}}
\newcommand{\fhatp}{\vo{\hat{f}}(\param)}
\newcommand{\fhatpt}{\vo{\hat{f}}^T(\param)}
\newcommand{\fp}{\vo{f}(\param)}
\newcommand{\fpt}{\vo{f}^T(\param)}

\newcommand{\phipt}{\vo{\Phi}^T(\param)}

\newcommand{\xdot}{\dot{\vo{x}}}

\newcommand{\pdfp}{p(\param)}
\newcommand{\set}[1]{\mathcal{#1}}
\newcommand{\Exp}[1]{\mathbb{E}\left[#1\right]}
\newcommand{\E}[1]{\Exp{#1}}

\newcommand{\basis}[2]{\phi_{#1}\left(#2\right)}
\newcommand{\I}[1]{\vo{I}_{#1}}
\newcommand{\X}{\vo{X}}

\newcommand{\W}{\vo{W}}

\newcommand{\R}{\vo{R}}
\newcommand{\Q}{\vo{Q}}

\newcommand{\Phin}[1]{\vo{\Psi}(\param)}
\newcommand{\Phint}[1]{\vo{\Psi}^T(\param)}

\newcommand{\Real}{\mathbb{R}}

\renewcommand{\vec}[1]{\textbf{vec}\left(#1\right)}
\newcommand{\xpc}{\x_{\text{pc}}}
\newcommand{\xgpc}{\x_{\text{gpc}}}
\newcommand{\xcpc}{\x_{\text{cpc}}}

\newcommand{\inner}[1]{\left\langle \vo{e}\phi_i\right\rangle}

\newcommand{\eqnlabel}[1]{\label{eqn:#1}}
\newcommand{\figlabel}[1]{\label{fig:#1}}

\newcommand{\eqn}[1]{(\ref{eqn:#1})}
\newcommand{\Eqn}[1]{(\ref{eqn:#1})}
\newcommand{\fig}[1]{Fig. \ref{fig:#1}}
\newcommand{\Fig}[1]{Fig. \ref{fig:#1}}

\DeclareMathAlphabet{\mathbfsf}{\encodingdefault}{\sfdefault}{bx}{n}

\renewcommand{\H}{\vo{H}}

\newcommand{\domain}[1]{\set{D}}
\newcommand{\domainD}{\set{D}_{\param}}

\newcommand{\trace}[1]{\textbf{tr}\left( #1 \right)}
\newcommand{\tr}[1]{\textbf{tr}\:#1}

\newcommand{\algo}[1]{Algorithm \ref{#1}}
\newcommand{\lfp}[1]{linear-fit-predict }

\begin{document}
\maketitle
\thispagestyle{empty}
\pagestyle{empty}

\begin{abstract}
Polynomial chaos expansions provide surrogate models for stochastic systems, with coefficients typically derived using Galerkin projection, stochastic collocation, or least squares approximation. These traditional approaches often fail to accurately capture statistical moments without resorting to high-order approximations. We propose a constrained optimization framework that modifies standard techniques to determine polynomial chaos coefficients that precisely recover the first two statistical moments. The effectiveness of our approach is demonstrated on several candidate algebraic functions of random variables, showing significant improvements in statistical accuracy even with low-order approximations.
\end{abstract}
\begin{keywords}
 polynomial chaos, Galerkin projection, uncertainty quantification
\end{keywords}
\section{Introduction}\label{sec:intro}
Polynomial chaos (PC) expansions provide a systematic framework for representing and propagating uncertainty in dynamical systems. By expressing random variables as truncated series of orthogonal polynomials with respect to known probability measures~\cite{wiener_askey}, PC enables tractable analysis of stochastic dynamics through deterministic surrogates. In control applications, where robustness and performance often depend critically on model uncertainties, PC-based methods have been widely employed to model, propagate, and quantify uncertainty~\cite{Shen2020PCReview,kim2021RobustKF,hover2006PCStability,Bhattacharya2019RobustLQR}. These surrogate models allow efficient analysis of complex and computationally intensive systems, reducing reliance on costly Monte Carlo simulations. 

Arbitrary probability distributions can be represented through expansions
in easy-to-sample distributions (e.g. uniform or Gaussian). For stochastic differential equations, Galerkin projection (GP) yields deterministic evolution equations for the expansion coefficients, solvable with standard numerical methods, and the effectiveness of generalized PC (gPC) in such problems is well established~\cite{dXiu_jcp,dXiu_bc}.
\begin{figure}[tb!]\centering
\includegraphics[width=0.4\textwidth]{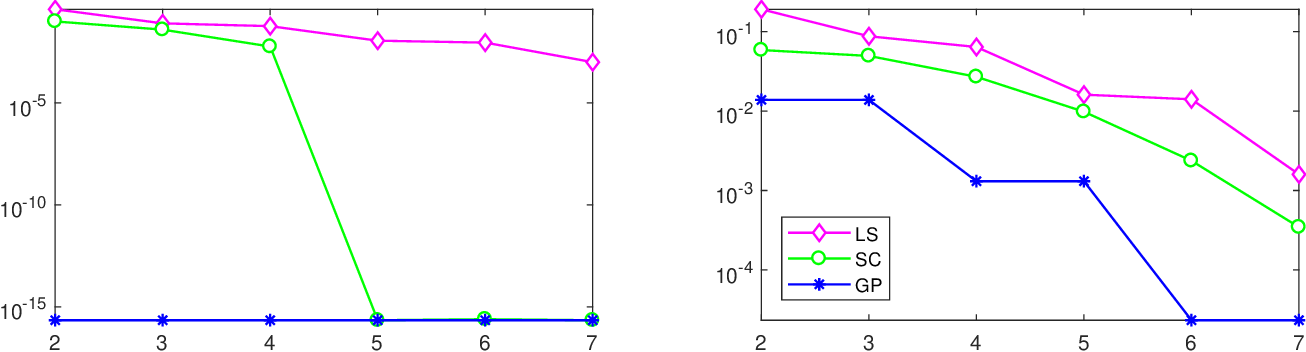}
\begin{picture}(0,0)
        \put(-200,20){\rotatebox{90} {{\scriptsize Error}}} 
        \put(-95,20){\rotatebox{90} {{\scriptsize Error}}} 
        \put(-150,-6){{$\kappa$}}
        \put(-50,-6){{$\kappa$}}
        \put(-160,55){{\scriptsize $\E{f(\Delta)}$}}
        \put(-60,55){{\scriptsize $\E{f^2(\Delta)}$}}
\end{picture}
\caption{\small Absolute errors in estimated moments of $f(\Delta)=\Delta^8$ using LS, SC, and GP formulations for different approximation orders ($\kappa$), where $\Delta$ is uniformly distributed over $[-1,1]$.}
\figlabel{pcErrors}
\vspace{-20pt}
\end{figure}
GP requires inner-product integrals, which may be evaluated intrusively by modifying the governing equations or non-intrusively using sampling-based quadrature. Non-intrusive approaches are particularly attractive in practice, as they treat the simulator as a black box. Among these, stochastic collocation (SC) and least-squares (LS) regression are widely used, with comparative studies provided in~\cite{constantine,sc2009eldred}. While theoretically an infinite-order PC expansion converges to any square-integrable stochastic process~\cite{wiener_askey,cameron_martin}, practical implementations require truncation to finite order. This truncation is a major source of error, often leading to inaccurate estimates of statistical moments. In particular, as illustrated in \fig{pcErrors}, (i) GP recovers the first moment exactly but may exhibit significant errors in higher moments, especially at low orders, while (ii) SC and LS generally fail to recover even the first moment accurately, with LS being least accurate.

Since many control applications often rely primarily on the first two moments (e.g., mean and variance) for analysis and design, there is strong motivation to develop surrogate modeling techniques that guarantee their accuracy at low polynomial orders. 
To this end, we propose a framework for computing PC coefficients that enforces exact matching of the first two moments with those of the underlying random variables. 
This framework yields constrained variants of the GP and LS formulations, namely, the \textit{constrained} $\mathcal{L}_2$ and \textit{constrained} $l_2$ methods, respectively. In contrast, SC admits no such modification, as its interpolation-based construction leaves no degrees of freedom for imposing additional constraints. 

The remainder of the paper is organized as follows. Sections \ref{sec:conL2} and \ref{sec:conl2} present the constrained $\mathcal{L}_2$ and constrained $l_2$ formulations, stated in Theorems \ref{thm:gp1}, \ref{thm:bestL2}, and \ref{thm:bestl2}, together with numerical results. Concluding remarks are provided in Section \ref{sec:conclusions}.

\section{Constrained $\mathcal{L}_2$-optimal Approximation}\label{sec:conL2}
\begin{figure*}[tb!]
    \centering
    \includegraphics[width=0.9\textwidth]{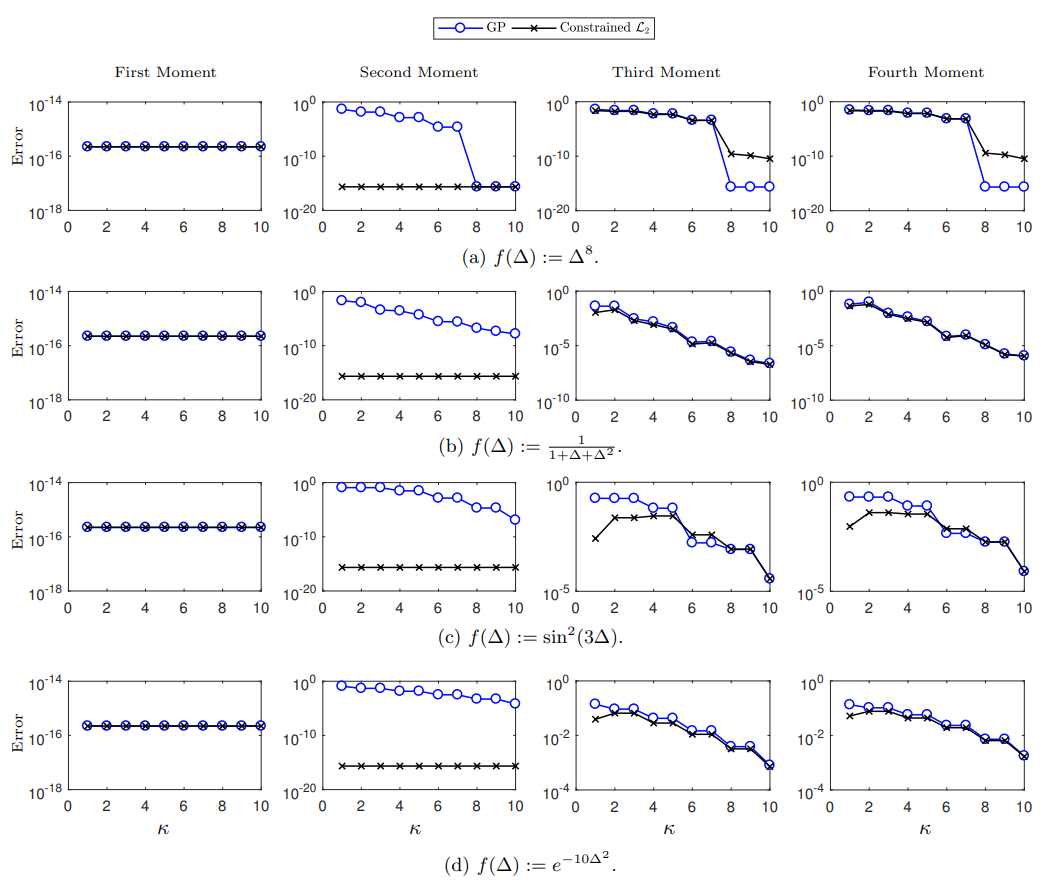}
    \caption{Accuracy of PC approximations of different orders ($\kappa$) for various functions using standard GP and constrained $\mathcal{L}_2$ formulations. Plots show absolute error, $\left|\mathbb{E}\big[f^m(\param)\big] - \mathbb{E}\big[\hat{f}^m(\param)\big]\right|$, for $m=$ 1, 2, 3, and 4, where $\hat{f}$ is the approximated function, and $\Delta$ is uniformly distributed over $[-1,1]$.}
    \label{fig:conGPC}
\end{figure*}

Let $\param\in\domainD\subseteq\real^d$ be a random vector with probability density function $\pdfp$. Let $\vo{f}(\param):\domainD\mapsto\real^n$ be a vector function whose components are square integrable, i.e.
$$
\int_{\domainD} \vo{f}^2(\param)\pdfp d\param < \infty,
$$
where the integral and inequality are evaluated elementwise. 
The objective here is to approximate $\vo{f}(\param)$ using PC expansions, i.e. find PC coefficients $\vo{f}_i\in\real^n$ such that
\begin{align}
\vo{f}(\param) \approx \fhat(\param) :=  \sum_{i=0}^N \vo{f}_i\phi_i(\param), \eqnlabel{pcF}
\end{align}
for a known set of basis polynomials $\{\phi_i(\param)\}$, which is selected based on the type of random distribution \cite{wiener_askey}.
The highest degree of polynomials $\{\phi_i(\param)\}$ is called the \textit{order of approximation}, and is denoted by $\kappa$. The number of basis functions ($N+1$) is related to the dimension ($d$) of the random vector $\param$ and the order of approximation $\kappa$ by $(N+1)=\frac{(d+\kappa)!}{d! \: \kappa !}$. We define $\mo{\Phi}(\param)$ as
\begin{align}
\mo{\Phi}(\param) &:= \begin{bmatrix}\basis{0}{\param}, & \cdots, & \basis{N}{\param}\end{bmatrix}^T.
\eqnlabel{Phi:def}
\end{align}
We also define matrix $\F\in\real^{n\times(N+1)}$, with polynomial chaos coefficients $\vo{f}_i$, as its columns, $\F := \begin{bmatrix} \vo{f}_0, & \cdots, & \vo{f}_N \end{bmatrix}$.
Therefore, $\fhat(\param)$ can be compactly written as
\begin{align}
\fhat(\param) = \F\mo{\Phi}(\param) \eqnlabel{compactFpc}.
\end{align}
The coefficients $\vo{f}_i$ are computed via GP, LS, or SC. We next characterize the moment errors associated with these methods and present new formulations that achieve improved statistical accuracy.

In the GP formulation, or the $\mathcal{L}_2$ formulation, the basis functions $\phi_i(\param)$ are $d$-variate polynomials that are orthogonal with respect to $\pdfp$, i.e.,
\begin{multline}
\E{\phi_i(\param)\phi_j(\param)} := \int_{\domainD} \phi_i(\param)\phi_j(\param) \pdfp d\param= 0, \\\text{ for } i \neq j.
\eqnlabel{orthoBasis}
\end{multline}
Since $\phi_0(\param) = 1$ for orthogonal basis, it follows that $\E{\vo{\Phi}(\param)} = [1,0,\cdots,0]^T$.

Optimal coefficients $\vo{f}_i$ are determined by projecting the error $\vo{e}(\param):=\vo{f}(\param) - \fhat(\param)$ against each basis polynomial and setting it to zero,
$
\E{\vo{e}(\param) \phi_i(\param)} = \vo{0}, \text{ for } i=0,\cdots,N,
$
or more compactly
\begin{align}
\E{\vo{f}(\param)\vo{\Phi}^T(\param)} - \F\E{\mo{\Phi}(\param)\vo{\Phi}^T(\param)} = \vo{0}.
\eqnlabel{gp:errProj}
\end{align}
From orthogonality \eqn{orthoBasis},
\begin{multline}
\E{\mo{\Phi}(\param)\vo{\Phi}^T(\param)} \\= \diag\begin{pmatrix}\E{\phi^2_0(\param)} &\cdots& \E{\phi^2_N(\param)}\end{pmatrix} =: \W.\eqnlabel{gp:W}
\end{multline}
Let us denote the optimal coefficients matrix determined using the GP formulation by $\F_{\text{GP}}$ which follows from \eqn{gp:errProj}
\begin{align}
\F_\text{GP} = \E{\vo{f}(\param)\vo{\Phi}^T(\param)}\W^{-1}. \eqnlabel{pcAnalytical}
\end{align}
The first and second order moments of $\fhatp$ are computed as follows,
\begin{align}
\E{\fhatp} & = \F\E{\mo{\Phi}(\param)} = \F\begin{bmatrix}1 & 0 & \cdots & 0\end{bmatrix}^T,  \eqnlabel{gp:meanFhat} 
\end{align}
\begin{align}
\E{\fhatp\fhatpt} & = \F\E{\mo{\Phi}(\param)\mo{\Phi}^T(\param)}\F^T = \F\W\F^T. \eqnlabel{gp:covFhat}
\end{align}
Higher order moments can be similarly computed from $\fhatp$, and will be prone to approximation errors.



We next present the result for determining PC coefficients that exactly match first and second order statistics of $\fp$.

\begin{theorem}\label{thm:gp1}
The PC coefficients that provide exact first and second order statistics are given by the columns of
\begin{align}
\F = \begin{bmatrix} \E{\fp} &  \vo{L}\vo{U}\W_1^{-1/2} \end{bmatrix}, \eqnlabel{gp:twoOrderExact}
\end{align}
where
\begin{align*}
\W_1 &= \mathbf{diag}\begin{pmatrix}\E{\phi^2_1(\param)} &\cdots& \E{\phi^2_N(\param)}\end{pmatrix},\\
\vo{L}\vo{L}^T &= \E{\fp\fpt} -\E{\fp}\E{\fpt}, 
\end{align*}
$\vo{U}\in\real^{n\times N}$ is an arbitrary matrix satisfying $\vo{U}\vo{U}^T = \I{n}$, and $N\ge n$.
\end{theorem}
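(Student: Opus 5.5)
The plan is a direct verification using the moment formulas \eqn{gp:meanFhat} and \eqn{gp:covFhat}, which hold for any coefficient matrix $\F$ because the basis is orthogonal. Write $\F = [\vo{f}_0,\ \F_1]$, where $\F_1\in\real^{n\times N}$ collects the coefficients of the non-constant basis polynomials $\phi_1,\dots,\phi_N$.

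\textbf{Step 1 (mean).} Since $\E{\vo{\Phi}(\param)} = [1,0,\dots,0]^T$, \eqn{gp:meanFhat} gives $\E{\fhatp} = \vo{f}_0$. Matching the first moment therefore holds if and only if $\vo{f}_0 = \E{\fp}$, which is exactly the first column in \eqn{gp:twoOrderExact}.

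\textbf{Step 2 (second moment).} Because $\W = \diag(\E{\phi_0^2},\W_1)$ and $\phi_0=1$ gives $\E{\phi_0^2}=1$, \eqn{gp:covFhat} splits as
\[
\E{\fhatp\fhatpt} = \vo{f}_0\vo{f}_0^T + \F_1\W_1\F_1^T .
\]
With $\vo{f}_0=\E{\fp}$, the condition $\E{\fhatp\fhatpt}=\E{\fp\fpt}$ becomes
\[
\F_1\W_1\F_1^T = \E{\fp\fpt}-\E{\fp}\E{\fpt} = \vo{L}\vo{L}^T .
\]
The right-hand side is the covariance matrix, which is positive semidefinite, so a factor $\vo{L}\in\real^{n\times n}$ exists (for example a Cholesky factor or the symmetric square root).

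\textbf{Step 3 (solving the factorization).} Substitute $\vo{G} := \F_1\W_1^{1/2}$. This is well defined because $\W_1$ is diagonal with positive entries, as $\E{\phi_i^2}>0$. The condition becomes $\vo{G}\vo{G}^T = \vo{L}\vo{L}^T$. Taking $\vo{G}=\vo{L}\vo{U}$ with $\vo{U}\vo{U}^T=\I{n}$ gives $\vo{G}\vo{G}^T = \vo{L}\vo{U}\vo{U}^T\vo{L}^T = \vo{L}\vo{L}^T$. Hence $\F_1 = \vo{L}\vo{U}\W_1^{-1/2}$, the second block in \eqn{gp:twoOrderExact}.

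The requirement $N\ge n$ is exactly what makes a $\vo{U}\in\real^{n\times N}$ with orthonormal rows exist, since rank$(\vo{U}\vo{U}^T)=n$ forces $N\ge n$. Conversely, when $\vo{L}$ is invertible, every solution $\vo{G}$ of $\vo{G}\vo{G}^T=\vo{L}\vo{L}^T$ has the form $\vo{L}\vo{U}$ with $\vo{U}=\vo{L}^{-1}\vo{G}$. So the stated family is in fact the complete solution set, which I would note as a remark.

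\textbf{Expected obstacle.} The only delicate point is the rank-deficient case, where $\vo{L}$ may be singular. The theorem asserts only sufficiency, so the verification in Steps 1--3 goes through unchanged. The existence of the factor $\vo{L}$ rests on positive semidefiniteness of the covariance, which I would state explicitly. Everything else is routine block-matrix algebra.
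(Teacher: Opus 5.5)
Your proposal is correct and follows the paper's proof essentially step for step: fix the first column to $\E{\fp}$ using $\phi_0=1$, split $\W=\diag(1,\W_1)$ to reduce the second-moment condition to $\F_1\W_1\F_1^T=\vo{L}\vo{L}^T$, and solve it with $\F_1=\vo{L}\vo{U}\W_1^{-1/2}$. Your substitution $\vo{G}=\F_1\W_1^{1/2}$ makes explicit the verification, and for invertible $\vo{L}$ the completeness of the family, that the paper passes over as a ``standard linear algebra problem.''
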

\begin{proof}
The exactness of the mean estimated using GP formulation can be verified by examining \eqn{pcAnalytical}.
Since $\phi_0(\param)=1$, it readily follows that the first column of $\F_\text{GP}$ is $\E{\fp}$. 

The error in the second moment can be set to zero, if the following constraint is satisfied
\begin{align}
\F\W\F^T = \E{\fp\fpt}. \eqnlabel{gp:quadConstr}
\end{align}
Let us partition,
\begin{align}
\F := \begin{bmatrix} \E{\fp} & \F_1 \end{bmatrix}, \eqnlabel{gp:partF}
\end{align}
where $\F_1\in\real^{n\times N}$. This directly satisfies the mean constraint. Partitioning $\W$ as
$$
\W := \begin{bmatrix} \E{\phi_0(\param)} & \vo{0}\\\vo{0} & \W_1 \end{bmatrix} =  \begin{bmatrix} 1 & \vo{0}\\\vo{0} & \W_1 \end{bmatrix},
$$
\eqn{gp:quadConstr} becomes
\begin{align}
\F_1\W_1\F_1^T = \E{\fp\fpt} -\E{\fp}\E{\fpt}. \eqnlabel{gp:quadConstr1}
\end{align}
Since $\E{\fp\fpt} -\E{\fp}\E{\fpt} \ge 0$, we employ Cholesky factorization to write
\begin{align*}
\vo{L}\vo{L}^T &= \E{\fp\fpt} -\E{\fp}\E{\fpt}.
\end{align*}
Equation \Eqn{gp:quadConstr1} is a standard linear algebra problem with solution
\begin{align}
\F_1 = \vo{L}\vo{U}\W_1^{-1/2},\eqnlabel{gp:soln}
\end{align}
and $\vo{U}\in\real^{n\times N}$ is an arbitrary (rectangular) unitary matrix, i.e., $\vo{U}\vo{U}^T = \I{n}$.  Therefore, the PC coefficients that provide first and second moments with no error are given by the columns of

\begin{align*}
\F = \begin{bmatrix} \E{\fp} &  \vo{L}\vo{U}\W_1^{-1/2} \end{bmatrix}.
\end{align*}
\end{proof}

\vspace{-10pt}
The solution of $\F$ in \eqn{gp:twoOrderExact} is quite different from the solution in \eqn{pcAnalytical} that is obtained via GP approach.
The variable $\vo{U}$ parameterizes a family of solutions for $\F$ that exactly recovers the first and second moments. Clearly, for rows of $\vo{U}$ to be orthonormal, we require $N\ge n$.

If only first and second order statistics are required, then we can choose $N=n$ and $\vo{U}=\I{n}$ resulting in
\begin{align*}
\F = \begin{bmatrix} \E{\fp} &  \vo{L}\W_1^{-1/2} \end{bmatrix}. 
\end{align*}
For $N>n$, there are extra degrees of freedom that can be used to minimize other errors.

In the standard $\mathcal{L}_2$ formulation, the coefficients $\F$ are determined such that $\E{\|\fp-\fhatp\|_2}$ is minimized, and results in \eqn{gp:errProj}. The number of equations in \eqn{gp:errProj} is equal to the number of unknowns, which results in a unique solution for $\F$.
Adding constraints for first and second order statistics will result in more constraints than variables, and thus \eqn{gp:errProj} cannot be exactly satisfied. Therefore, optimal coefficients can be obtained via a constrained minimization of the residual error, i.e.,
\begin{align}
\min_{\vo{U}} \E{\vo{e}^T(\param)\vo{e}(\param)}, \text{ subject to } \vo{U}\vo{U}^T=\I{n}. \eqnlabel{QCopt}
\end{align}
Note that $\E{\vo{e}^T(\param)\vo{e}(\param)} = \tr{\E{\vo{e}(\param)\vo{e}^T(\param)}},$ and
\begin{align*}
&\E{\vo{e}(\param)\vo{e}^T(\param)} \\ & = \E{\Big(\fp-\F\mo{\Phi}^T(\param)\Big)\Big(\fp-\F\mo{\Phi}^T(\param)\Big)^T},\\
& = \vo{Q} - \F_1\R^T - \R\F_1^T + \F_1\W_1\F_1^T,
\end{align*}
where
\begin{align*}
\vo{Q} & := \E{(\fp-\E{\fp})(\fp-\E{\fp})^T},
\end{align*}
\begin{align}
\R & := \E{\fp\vo{\Phi}^T_{1}(\param)}, \eqnlabel{R}
\end{align}

$\F_1$ depends on $\vo{U}$ as given by \eqn{gp:soln}, and $\vo{\Phi}_{1}(\param)$ is the sub-vector of $\vo{\Phi}(\param)$ without the first element, i.e.,
$$
\vo{\Phi}_{1}(\param) := \begin{bmatrix}\phi_1(\param) \\ \vdots \\ \phi_N(\param)\end{bmatrix}.
$$
Therefore, the optimization problem in \eqn{QCopt} can be written as
\begin{multline}
\min_{\vo{U}\in\real^{n\times N}} \trace{\vo{Q} - \F_1\R^T - \R\F_1^T + \F_1\W_1\F_1^T}, \\ \text{ subject to } \vo{U}\vo{U}^T=\I{n}. \eqnlabel{nonConvex1}
\end{multline}
The optimization problem \eqn{nonConvex1} is non convex due to the constraint $\vo{U}\vo{U}^T=\I{n}$. However, it is a quadratically constrained quadratic programming problem, which can be converted to a convex optimization problem using various relaxations techniques \cite{bao2011semidefinite, anstreicher2012convex, park2017general} and solved with existing solvers \cite{cvxpy,cvxpy_rewriting}.

We propose an alternative approach to solve \eqn{nonConvex1}: solution \eqn{pcAnalytical} is first computed and subsequently projected on to the constraint set $\vo{U}\vo{U}^T=\I{n}$. 
The new formulation is formally presented as the following theorem.

\begin{theorem}
\label{thm:bestL2}
The coefficients that result in the $\mathcal{L}_2$-optimal PC approximation, subject to constraints on first and second-order statistics, are given by
\begin{align}
\F := \begin{bmatrix} \E{\fp} &  \vo{L}\vo{U}^\ast\W_1^{-1/2} \end{bmatrix},
\end{align}
where $\vo{U}^\ast :=  \vo{M}_1\vo{T}\vo{M}^T_2$, $\vo{M}_1$ and $\vo{M}_2$ are unitary matrices obtained from the singular value decomposition of $\vo{L}^{-1}\R\W_1^{-1/2}$, i.e.,
$$
\vo{L}^{-1}\R\W_1^{-1/2} = \vo{M}_1\vo{D}\vo{M}^T_2,
$$
$\vo{T}:=\begin{bmatrix}\I{n} & \vo{0}_{n\times(N-n)}\end{bmatrix}$, and $N\ge n$.
\end{theorem}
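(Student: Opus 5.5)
The plan is to read the statement the way the sentence before it describes the method: $\mathcal{L}_2$-optimality means that $\fhat$ is the constrained surrogate closest, in a whitened $\mathcal{L}_2$ sense, to the unconstrained GP surrogate \eqn{pcAnalytical}. The problem then reduces to an orthogonal Procrustes problem on the set $\vo{U}\vo{U}^T=\I{n}$. Under this reading the stated $\vo{U}^\ast$ is the exact minimizer. I am not proving that $\vo{U}^\ast$ solves \eqn{nonConvex1} itself; the last paragraph explains why that stronger claim would lead to a different matrix.

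\emph{Step 1: fix the parametrization.} By Theorem \ref{thm:gp1}, every $\F$ that reproduces the first two moments has the form $\F=\begin{bmatrix}\E{\fp} & \vo{L}\vo{U}\W_1^{-1/2}\end{bmatrix}$ with $\vo{U}\vo{U}^T=\I{n}$. We assume $\vo{L}$ is invertible, i.e.\ the covariance $\vo{Q}$ is positive definite, as the statement implicitly does by using $\vo{L}^{-1}$. From \eqn{pcAnalytical}, the GP coefficients share the first column $\E{\fp}$, and their remaining block is $\F_{\text{GP},1}=\R\W_1^{-1}$, with $\R$ as in \eqn{R}.

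\emph{Step 2: write the criterion as a Frobenius distance.} The criterion is the whitened mean-square discrepancy
\[
J(\vo{U}) := \E{\big\|\vo{L}^{-1}\big(\fhatp-\F_{\text{GP}}\vo{\Phi}(\param)\big)\big\|_2^2}.
\]
The mean columns of $\F$ and $\F_{\text{GP}}$ agree, and $\E{\vo{\Phi}_1(\param)\vo{\Phi}_1^T(\param)}=\W_1$. Hence
\[
J(\vo{U})=\trace{\vo{L}^{-1}(\F_1-\R\W_1^{-1})\W_1(\F_1-\R\W_1^{-1})^T\vo{L}^{-T}}.
\]
Substituting $\F_1=\vo{L}\vo{U}\W_1^{-1/2}$, this collapses to
\[
J(\vo{U})=\|\vo{U}-\vo{A}\|_F^2,\qquad \vo{A}:=\vo{L}^{-1}\R\W_1^{-1/2}.
\]
So minimizing $J$ is exactly projecting the whitened GP solution $\vo{A}$ onto the constraint set.

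\emph{Step 3: solve the Procrustes problem.} Expanding gives $\|\vo{U}-\vo{A}\|_F^2=n-2\,\trace{\vo{U}\vo{A}^T}+\|\vo{A}\|_F^2$, because $\trace{\vo{U}\vo{U}^T}=n$. It therefore suffices to maximize $\trace{\vo{U}\vo{A}^T}$. Insert the SVD $\vo{A}=\vo{M}_1\vo{D}\vo{M}_2^T$, where $\vo{D}$ is $n\times N$ with the singular values $\sigma_i\ge 0$ on its diagonal. This gives
\[
\trace{\vo{U}\vo{A}^T}=\trace{\vo{Z}\vo{D}^T},\qquad \vo{Z}:=\vo{M}_1^T\vo{U}\vo{M}_2 .
\]
The matrix $\vo{Z}$ still satisfies $\vo{Z}\vo{Z}^T=\I{n}$, so $|Z_{ii}|\le 1$ and $\trace{\vo{Z}\vo{D}^T}=\sum_i\sigma_i Z_{ii}\le\sum_i\sigma_i$. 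Equality holds at $\vo{Z}=\vo{T}$, i.e.\ at $\vo{U}^\ast=\vo{M}_1\vo{T}\vo{M}_2^T$. This $\vo{U}^\ast$ is feasible because $\vo{T}\vo{T}^T=\I{n}$, which requires $N\ge n$. Substituting it into Theorem \ref{thm:gp1} yields the stated $\F$. Uniqueness holds when $\vo{A}$ has full row rank.

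\emph{Main obstacle.} The delicate step is Step 2's choice of criterion, not the Procrustes algebra. In the raw objective \eqn{nonConvex1}, the term $\F_1\W_1\F_1^T=\vo{L}\vo{L}^T$ is constant, so \eqn{nonConvex1} amounts to maximizing $\trace{\vo{U}(\vo{L}^T\R\W_1^{-1/2})^T}$. Its minimizer would come from the SVD of $\vo{L}^T\R\W_1^{-1/2}$. That agrees with the stated $\vo{U}^\ast$ only in special cases, for instance when $\vo{L}\vo{L}^T$ is a multiple of $\I{n}$. I would therefore state explicitly that optimality is with respect to the whitened discrepancy $J$. That discrepancy is the $\mathcal{L}_2$ measure implied by ``project the GP solution onto the constraint set''. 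I would then carry out Steps 1--3 in that order.
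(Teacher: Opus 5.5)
Your proposal is correct, and the issue you flag as the main obstacle is a real defect in the paper's own proof, not in yours. The paper derives \eqn{gp:U>GP}, i.e.\ $J_{\vo{U}}-J_{\text{GP}}=\|\vo{L}\vo{U}-\R\W_1^{-1/2}\|_F^2$. It notes that this vanishes at $\vo{U}_{\text{GP}}=\vo{L}^{-1}\R\W_1^{-1/2}$, and then simply asserts that the SVD projection $\vo{M}_1\vo{T}\vo{M}_2^T$ of $\vo{U}_{\text{GP}}$ is optimal. That projection minimizes $\|\vo{U}-\vo{U}_{\text{GP}}\|_F$, not $\|\vo{L}(\vo{U}-\vo{U}_{\text{GP}})\|_F$. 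Since $\trace{\vo{L}\vo{U}\vo{U}^T\vo{L}^T}$ is constant on the constraint set, the minimizer of \eqn{nonConvex1} comes from the SVD of $\vo{L}^T\R\W_1^{-1/2}$, as you say. The two coincide for $n=1$, where $\vo{L}$ is a positive scalar; this covers every numerical example in the paper, but the agreement fails in general. For instance, take $n=N=2$, an orthonormal basis ($\W_1=\I{2}$), $\vo{L}=\diag(1,2)$ and $\R=\begin{bmatrix}1/2&0\\1/2&1/2\end{bmatrix}$, which is realizable since $\Q-\R\R^T\succ 0$. The paper's $\vo{U}^\ast$ is then the rotation by $\arctan(1/3)$, with excess cost $J_{\vo{U}}-J_{\text{GP}}\approx 2.271$. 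The rotation by $\arctan(2/3)$ gives $\approx 2.144$. So the literal claim about \eqn{nonConvex1} is false for $n\ge 2$, and changing the criterion, as you do, is necessary rather than merely convenient.

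Compared with the paper, your argument supplies what its last step omits. It gives a criterion for which the stated $\vo{U}^\ast$ is provably optimal, and it gives the Procrustes argument together with the full-row-rank uniqueness condition. You can make that criterion look less ad hoc. Since $\E{\vo{e}_{\text{GP}}(\param)\vo{\Phi}^T(\param)}=\vo{0}$ by \eqn{gp:errProj}, the cross term vanishes and
\[
\E{\vo{e}^T(\param)\Q^{-1}\vo{e}(\param)}=\E{\vo{e}_{\text{GP}}^T(\param)\Q^{-1}\vo{e}_{\text{GP}}(\param)}+J(\vo{U}).
\]
Hence the stated $\F$ is exactly the moment-constrained minimizer of the $\Q^{-1}$-weighted mean-square error of $\fp$ itself, not just of the distance to the GP surrogate. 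Equivalently, it solves \eqn{nonConvex1} for the whitened function $\vo{L}^{-1}\fp$, whose Cholesky factor is $\I{n}$, mapped back through $\vo{L}$. Stating the theorem in that form, with the remark that it reduces to \eqn{nonConvex1} when $n=1$, gives a correct version of the paper's claim.
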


\begin{proof}
Let $\F_\text{GP}$ denote the optimal solution \eqn{pcAnalytical} for the standard GP approach.
The corresponding partitioned $\F_1$ is $\F_{1\text{GP}}:=\R\W_1^{-1}$ and the corresponding objective value is
$$
J_\text{GP} :=  \tr{\E{\vo{e}_\text{GP}(\param)\vo{e}^T_\text{GP}(\param)}} = \trace{\Q - \R\W_1^{-1}\R^T}.
$$
The objective function $J_{\vo{U}}$ for an arbitrary $\vo{U}\in\real^{n\times N}$ and the corresponding $\F$ in \eqn{gp:twoOrderExact} is given by
\begin{align*}
J_{\vo{U}} &:=  \tr{\E{\vo{e}_{\vo{U}}(\param)\vo{e}_{\vo{U}}(\param)}},\\
& = \trace{\Q + \vo{L}\vo{U}\vo{U}^T\vo{L}^T - \R\W_1^{-1/2}\vo{U}^T\vo{L}^T}  \\ &-\trace{\vo{L}\vo{U}\W_1^{-1/2}\R^T}.
\end{align*}
The difference in costs $J_{\vo{U}}$ and $J_\text{GP}$ is
\begin{align}
&\notag J_{\vo{U}} - J_\text{GP} =  \textbf{tr}\left(\vo{L}\vo{U}\vo{U}^T\vo{L}^T - \R\W_1^{-1/2}\vo{U}^T\vo{L}^T \right. \\ 
& \quad \left. - \vo{L}\vo{U}\W_1^{-1/2}\R^T + \R\W_1^{-1}\R^T\right),\\
& = \trace{\left(\vo{L}\vo{U}-\R\W_1^{-1/2}\right)\left(\vo{L}\vo{U}-\R\W_1^{-1/2}\right)^T}.\eqnlabel{gp:U>GP}
\end{align}
Therefore, $J_{\vo{U}} - J_\text{GP} = 0$ is satisfied for $\vo{U} =  \vo{U}_\text{GP}$, where
\begin{align*}
\vo{U}_\text{GP}&:=\vo{L}^{-1}\R\W_1^{-1/2}.
\end{align*}
However, in general $\vo{U}_\text{GP}\vo{U}^{T}_\text{GP} \neq \I{n}$.
Therefore, we project $\vo{U}_\text{GP}$ on the constraint set $\vo{U}\vo{U}^T = \I{n}$ as follows.
We decompose $\vo{U}_\text{GP}$ using singular-value decomposition: $\vo{U}_\text{GP} = \vo{M}_1\vo{D}\vo{M}^T_2$, where $\vo{D}=\begin{bmatrix}\vo{\Lambda} & \vo{0}_{n\times(N-n)}\end{bmatrix}$, and $\vo{\Lambda}$ is diagonal matrix with singular values of $\vo{U}_\text{GP}$. 
An optimal $\vo{U}^\ast$ subject to $\vo{U}^\ast\vo{U}^{\ast^T} = \I{n}$, is recovered as $\vo{U}^\ast := \vo{M}_1\vo{T}\vo{M}^T_2$, where  $\vo{T}:=\begin{bmatrix}\I{n} & \vo{0}_{n\times(N-n)}\end{bmatrix}$.
\end{proof}
Note, \eqn{gp:U>GP} implies that $J_{\vo{U}} \geq J_\text{GP}$ for any $\vo{U}\neq\vo{U}_\text{GP}$~. Therefore, enforcing the second-moment constraint increases the approximation error in the $\mathcal{L}_2$ sense.

\Fig{conGPC} shows the accuracy of the proposed constrained $\mathcal{L}_2$ method. We compare the moment errors for the standard GP formulation with the proposed formulation for increasing approximation order for a few candidate functions. The errors are plotted on a semi-log scale, lower bounded by the machine precision. We use Legendre polynomials as basis functions since $\Delta$ is uniformly distributed over $[-1,1]$.

For the polynomial case $f(\Delta) := \Delta^8$, the first-moment errors are negligible (bounded by machine precision) under both the standard GP and the constrained $\mathcal{L}_2$ formulations. However, the second-moment errors are substantial for the standard GP method and only vanish once the approximation order reaches eight. 
For non-polynomial functions, including $f(\Delta):=\tfrac{1}{1+\Delta+\Delta^2}$, $f(\Delta):=\sin^2(3\Delta)$, and $f(\Delta):=e^{-10\Delta^2}$, the standard GP formulation exhibits large second-moment errors even at high approximation orders. 
In contrast, the constrained $\mathcal{L}_2$ formulation yields second-moment errors at machine precision across all cases, even with low-order expansions. 
Thus, the constrained $\mathcal{L}_2$ approach enables the construction of lower-order surrogate models that capture the first two statistical moments exactly.
\Fig{conGPC} also shows the errors in $3^\text{rd}$ and $4^\text{th}$ moments. The errors for the standard GP and the constrained $\mathcal{L}_2$ formulations are nearly identical. Thus, the proposed formulation has little impact on the accuracy of higher-order moments relative to the standard GP method.

\section{Constrained $l_2$-optimal Approximation}\label{sec:conl2}

\begin{figure*}[tb]
    \centering
    \includegraphics[width=0.9\textwidth]{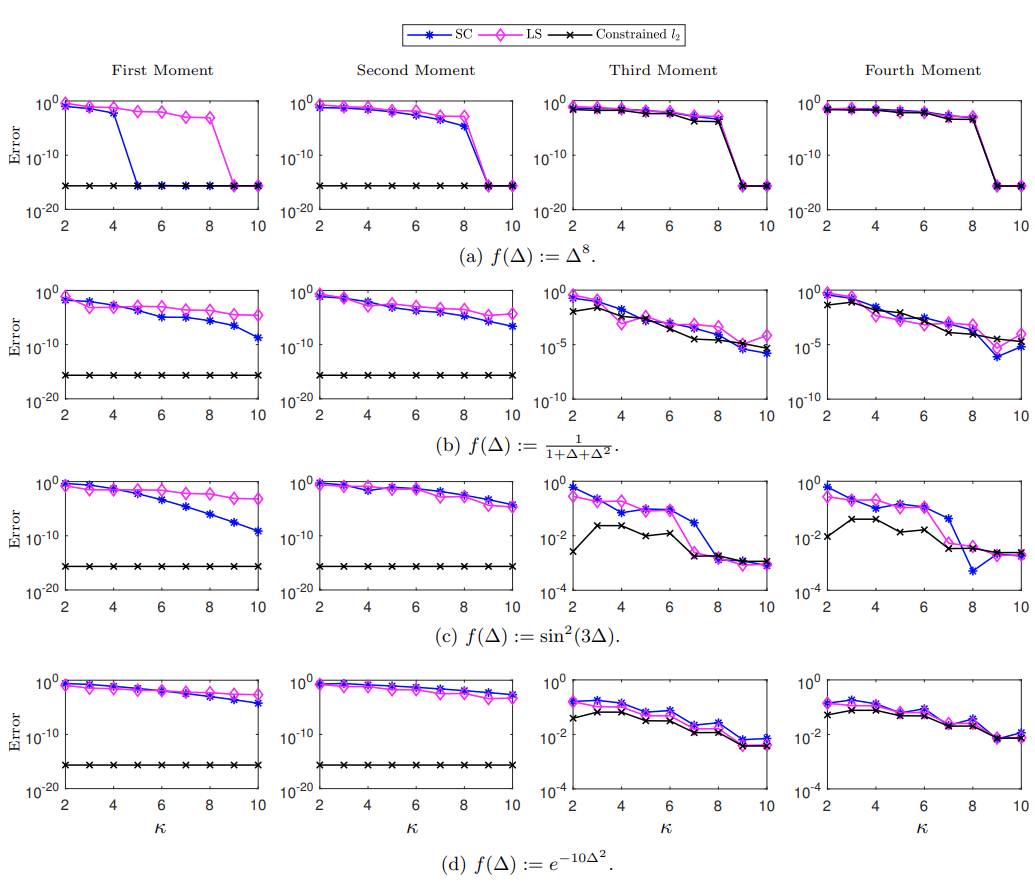}
    \caption{{Accuracy of PC approximations of different orders ($\kappa$) for various functions using standard SC, LS, and constrained $l_2$ formulations. Plots show absolute errors $\left|\mathbb{E}\big[f^m(\param)\big] - \mathbb{E}\big[\hat{f}^m(\param)\big]\right|$, for $m=1$, $2$, $3$, and $4$ respectively, where $\hat{f}$ is the approximated function, and $\Delta$ is uniformly distributed over $[-1,1]$.}}
    \label{fig:conSC}
\end{figure*}

Stochastic collocation is a non-intrusive method for uncertainty quantification that approximates the dependence of a system’s output on random inputs by evaluating the equations at finite number of sampled points.
The basis functions $\{\psi_i(\param)\}$ are selected to be multivariate polynomials interpolating over the data points $(\param_j,\vo{f}_j)$ defined by
\begin{align}
\vo{f}_j:=\vo{f}(\param_j), \eqnlabel{sc:fi}
\end{align}
where $\{\param_j\}_{j=0}^{N_{\text{SC}}}$ are discrete points in $\domainD$. Typically, the interpolation is achieved using
$$
\fp \approx \vo{\hat{f}}_{\text{SC}}(\param) := \sum_{i=0}^{N_{\text{SC}}} \vo{f}_i \psi_i(\param) =: \F_{\text{SC}}\vo{\Psi}(\param),
$$
where $\vo{\Psi}(\param)$ is a vector of Lagrange polynomials, $\psi_i(\param)$, defined by
\begin{multline*}
\psi_i(\param) := \prod_{j=0,j\neq i}^{N_{\text{SC}}} \frac{(\Delta_1-\Delta_{1j})\cdots(\Delta_d-\Delta_{dj})}{(\Delta_{1i}-\Delta_{1j})\cdots(\Delta_{di}-\Delta_{dj})}, \\ i=0,1,\cdots,N_{\text{SC}}.
\end{multline*}
Note, $\param_j := [\Delta_{1j}, \Delta_{2j}, \cdots, \Delta_{dj}]^T \in \Real^d$ and $\F_{\text{SC}} := [\vo{f}_0, \vo{f}_1, \cdots, \vo{f}_{N_{\text{SC}}}]$.

In SC, the moments of $\fp$ are approximated by computing the moments of $\vo{\hat{f}}_{\text{SC}}(\param)$, but these need not coincide with the true moments. Since $(N_{\text{SC}}+1)$ collocation points yield exactly $(N_{\text{SC}}+1)$ interpolation polynomials $\psi_i(\Delta)$ of degree $N_{\text{SC}}$, the formulation admits no additional degrees of freedom for enforcing moment constraints. Therefore, we turn to the least-squares (LS) method—also referred to as linear regression or point collocation \cite{walters2003towards,hosder2007efficient}—which we extend below to incorporate moment constraints.


First, we present the standard LS solution. We begin with selecting a suitable orthogonal polynomial basis $\vo{\Phi}(\param)$ such as defined in \eqn{Phi:def}. Then, $\fhat(\param) = \F\mo{\Phi}(\param)$.
A grid $G_{l_{2}}:=\{\param_i\}_{i=1}^{N_p}\in\domainD$ is created by generating $N_p$ samples from the random space $\domainD$. The LS solution for $\F$ is determined by the least squares fit, 
\begin{align*}
\min_{\F}\sum_{i=1}^{N_p}\|\F\vo{\Phi}(\param_i) - \vo{f}(\param_i)\|^2.
\end{align*}
The optimal $\F$ for this optimization problem is denoted as $\F_{\text{LS}}$, and is given by $\F_{\text{LS}} := \H_1^T\H_2^{-1}$,
where $\H_1 :=  \sum_{i=1}^{N_p}\vo{\Phi}(\param_i)\vo{f}(\param_i)^T$, and 
$\H_2 := \sum_{i=1}^{N_p}\vo{\Phi}(\param_i)\vo{\Phi}^T(\param_i)$.

Note, the existence of such $\F_{\text{LS}}$ requires that $N_p \geq (N+1)$. For the purpose of numerical experiments presented in this paper, we have used $N_p = 2(N+1)$, as recommended in \cite{hosder2007efficient}.

We next present an $l_2$-optimal approximation of $\fp$ with moment constraints. The moment constraints on $\F$ are imposed as (see \eqn{gp:meanFhat}, \eqn{gp:covFhat}, \eqn{gp:quadConstr} and \eqn{gp:partF})
\begin{align}
\F\begin{bmatrix}1 & 0 & \cdots & 0\end{bmatrix}^T = \E{\fp},  \eqnlabel{sc:mean} \\
 \F\W\F^T =  \E{\fp\fp^T}. \eqnlabel{sc:cov}
\end{align}
In $l_2$ formulation, the reference moments in \eqn{sc:mean} and \eqn{sc:cov} are computed by evaluating $\fp$ over a finite grid $G:=\{\param_j\}\in\domainD$, e.g., $\E{\fp} =  \sum_j w_j \vo{f}(\param_j)$ where $w_j$ is the weight associated with the grid point $\param_j$.

The $l_2$-optimal approximation of $\fp$ over the grid $G_{l_{2}}:=\{\param_i\}_{i=1}^{N_p}\in\domainD$ that recovers the first and second order statistics of $\fp$ is obtained from the following optimization problem,
\begin{equation} \eqnlabel{LS_optim_prob}
\begin{aligned}
\min_{\F}\sum_{i=1}^{N_p}\|\F\vo{\Phi}(\param_i) - \vo{f}(\param_i)\|^2, \\
\text{ subject to \eqn{sc:mean} and \eqn{sc:cov}. }
\end{aligned}
\end{equation}
The grid $G_{l_{2}}$ need not coincide with the grid $G$ used for computing true moments and may, in fact, be sparser. The resulting optimization is nonconvex due to the constraint in \eqn{sc:cov}. Following the $\mathcal{L}_2$ formulation, we reformulate the problem with unitary matrices, solve it without the nonconvex constraint, and project the solution onto the feasible set. The $l_2$-optimal solution with first- and second-moment constraints is stated in the theorem below.

\begin{theorem}
The coefficients that solve \eqn{LS_optim_prob} and result in the $l_2$-optimal PC approximation subject to constraints on first and second-order statistics, are given by
\begin{align}
\F := \begin{bmatrix} \E{\fp} &  \vo{L}\vo{U}^\ast\W_1^{-1/2} \end{bmatrix},
\end{align}
where $\vo{U}^\ast :=  \vo{M}_1\vo{T}\vo{M}^T_2$, $\vo{M}_1$ and $\vo{M}_2$ are  unitary matrices obtained from the singular value decomposition of $\hat{\vo{U}}$ defined in \eqn{sc:U1}, $\vo{T}:=\begin{bmatrix}\I{n} & \vo{0}_{n\times(N-n)}\end{bmatrix}$, and $N_p \geq (N+1)$. 
\label{thm:bestl2}
\end{theorem}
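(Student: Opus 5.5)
The plan mirrors the proof of Theorem~\ref{thm:bestL2}, with sample sums in place of expectations.

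\textbf{Step 1: parameterize the feasible set.} By the argument of Theorem~\ref{thm:gp1}, every $\F$ satisfying \eqn{sc:mean} and \eqn{sc:cov} has the form
\[
\F=\begin{bmatrix}\E{\fp} & \F_1\end{bmatrix},\qquad \F_1=\L\vo{U}\W_1^{-1/2},\qquad \vo{U}\vo{U}^T=\I{n},
\]
where $\L\L^T$ is the grid-computed covariance. This step uses only the reference moments, not the regression grid.

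\textbf{Step 2: reduce the objective to a quadratic in $\F_1$.} For each $\param_i\in G_{l_2}$ write $\vo{g}_i:=\vo{f}(\param_i)-\E{\fp}\phi_0(\param_i)$. Since $\phi_0=1$, the objective becomes $\sum_i\|\F_1\vo{\Phi}_1(\param_i)-\vo{g}_i\|^2$. Define
\[
\hat{\R}:=\sum_i \vo{g}_i\vo{\Phi}_1^T(\param_i),\qquad \hat{\H}:=\sum_i\vo{\Phi}_1(\param_i)\vo{\Phi}_1^T(\param_i).
\]
Then the objective equals
\[
\tr\!\left(\textstyle\sum_i \vo{g}_i\vo{g}_i^T - 2\F_1\hat{\R}^T+\F_1\hat{\H}\F_1^T\right).
\]
$\hat{\H}$ is positive definite when $N_p\ge N+1$ and the sample points are unisolvent. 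This is where that hypothesis enters.

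\textbf{Step 3: substitute $\F_1$ and complete the square.} Inserting $\F_1=\L\vo{U}\W_1^{-1/2}$ and completing the square as in \eqn{gp:U>GP} is the main obstacle. Unlike the $\mathcal{L}_2$ case, $\hat{\H}\ne\W_1$, so $\F_1\hat{\H}\F_1^T$ is not $\L\vo{U}\vo{U}^T\L^T$ and need not be constant on the constraint set.

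The route I would take is the following.

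\emph{Unconstrained minimizer.} Let $\hat{\vo{U}}$ be the minimizer over $\vo{U}$ without the orthonormality constraint. Setting the gradient to zero gives
\[
\L^T\L\,\hat{\vo{U}}\,\W_1^{-1/2}\hat{\H}\W_1^{-1/2}=\L^T\hat{\R}\W_1^{-1/2},
\]
so
\[
\hat{\vo{U}}=\L^{-1}\hat{\R}\hat{\H}^{-1}\W_1^{1/2}.
\]
Equivalently, $\hat{\vo{U}}=\L^{-1}\F_{1,\text{LS}}\W_1^{1/2}$, where $\F_{1,\text{LS}}$ is the unconstrained least-squares block. This $\hat{\vo{U}}$ is presumably what \eqn{sc:U1} defines.

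\emph{Quadratic form around $\hat{\vo{U}}$.} The cost then reads $c+\|\L(\vo{U}-\hat{\vo{U}})\vo{S}\|_F^2$ with $\vo{S}:=(\W_1^{-1/2}\hat{\H}\W_1^{-1/2})^{1/2}$. When $\vo{S}\approx\I{N}$, which holds when the regression grid integrates products of basis polynomials accurately (its $\mathcal{L}_2$ analogue), this reduces to the Frobenius distance weighted by $\L$.

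\textbf{Step 4: project onto the constraint set.} Following the paper's stated strategy (``solve without the nonconvex constraint and project''), I would take the Frobenius-nearest point to $\hat{\vo{U}}$ on the set $\{\vo{U}\vo{U}^T=\I{n}\}$. This is the orthogonal Procrustes / polar factor. Writing $\hat{\vo{U}}=\vo{M}_1\vo{D}\vo{M}_2^T$, the problem $\max \tr(\vo{U}^T\hat{\vo{U}})$ over $\vo{U}\vo{U}^T=\I{n}$ is solved by replacing $\vo{D}$ with $\vo{T}$, which gives $\vo{U}^\ast=\vo{M}_1\vo{T}\vo{M}_2^T$. Feasibility is immediate, since $\vo{U}^\ast\vo{U}^{\ast T}=\vo{M}_1\vo{T}\vo{T}^T\vo{M}_1^T=\I{n}$.

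\textbf{Caveat on optimality.} I expect the proof can only assert that this is the projection of the unconstrained optimum. Exact optimality holds only under the identity weighting, $\hat{\H}=\W_1$ and $\L$ orthogonal up to scale, as in Theorem~\ref{thm:bestL2}. I would state this explicitly as a caveat.
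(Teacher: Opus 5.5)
Your proposal follows the paper's proof essentially step for step: you parameterize the feasible set via Theorem~\ref{thm:gp1}, write the cost as a quadratic in $\F_1$ (your $\hat{\R}$ and $\hat{\H}$ are the paper's $\EE_1^T$ and $\EE_2$), take the unconstrained minimizer $\hat{\vo{U}}=\L^{-1}\EE_1^T\EE_2^{-1}\W_1^{1/2}$, and project via the SVD to $\vo{U}^\ast=\vo{M}_1\vo{T}\vo{M}_2^T$ with a feasibility check; the paper solves for $\F_1$ and then equates it with $\L\vo{U}\W_1^{-1/2}$, whereas you minimize over $\vo{U}$ directly, which is equivalent. Your completed-square form $c+\|\L(\vo{U}-\hat{\vo{U}})\vo{S}\|_F^2$ and the Procrustes argument supply justification the paper omits, and your caveat is accurate: the paper's proof likewise shows only that the projected point is feasible, not that it solves \eqn{LS_optim_prob}, because $\F_1\EE_2\F_1^T$ is not constant on $\{\vo{U}\vo{U}^T=\I{n}\}$ (your stated condition for exactness is sufficient rather than strictly necessary).
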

\begin{proof}
We define $\vo{e}(\param) := \vo{f}(\param) - \F\vo{\Phi}(\param)$, and seek an optimal $\F$ that minimizes 
$\trace{\sum_{i=1}^{N_p} \vo{e} (\param_i)\vo{e}^T(\param_i)}$, subject to \eqn{sc:mean} and \eqn{sc:cov}.
Using the results of Theorem \ref{thm:gp1}, constraint \eqn{sc:mean}  is satisfied when $\F$ is partitioned as $\F = \begin{bmatrix}\E{\fp} & \F_1\end{bmatrix}$, and \eqn{sc:cov} is satisfied when $\F_1 = \L\vo{U}\W_1^{-1/2}$, for any $\vo{U}$ satisfying $\vo{U}\vo{U}^T=\I{n}$.
Now, let us write the cost function as a function of $\F_1$
\begin{align*}
    J(\F_1) := \trace{\sum_{i=1}^{N_p} \vo{e}(\param_i)\vo{e}^T(\param_i)} 
\end{align*}
\begin{align*} 
     J(\F_1) = \trace{\EE_0 - \EE_1^T\F_1^T - \F_1\EE_1 + \F_1\EE_2\F_1^T},
\end{align*}
where
\begin{align*}
\EE_0 & := \sum_{i=1}^{N_p}\big(\vo{f}(\param_i) - \E{\fp}\big)\big(\vo{f}(\param_i) - \E{\fp}\big)^T, \\
\EE_1 & := \sum_{i=1}^{N_p}\vo{\Phi}_{1}(\param_i)\big(\vo{f}(\param_i) - \E{\fp}\big)^T, \\
\EE_2 &:= \sum_{i=1}^{N_p}\vo{\Phi}_{1}(\param_i)\vo{\Phi}^T_{1}(\param_i).
\end{align*}
Therefore, $J(\F_1)$ is minimized when
$$
\frac{\partial}{\partial \F_1} \trace{\EE_0 - \EE_1^T\F_1^T - \F_1\EE_1 + \F_1\EE_2\F_1^T} = 0,
$$
or
\begin{equation}
\F_1 = \EE_1^T\EE_2^{-1}. \eqnlabel{sc:F1}
\end{equation}
Equating the solution from \eqn{sc:F1} with the $\F_1$ that satisfies \eqn{sc:cov}, we get
\begin{align}
& \L\vo{U}\W_1^{-1/2} = \EE_1^T\EE_2^{-1} \\
\text{or } \quad & \hat{\vo{U}} := \L^{-1}\EE_1^T\EE_2^{-1}\W_1^{1/2}. \eqnlabel{sc:U1}
\end{align}
However, $\hat{\vo{U}}$ may not necessarily satisfy the constraint $\vo{U}\vo{U}^T = \I{n}$ that is necessary for matching the second moment.
Therefore, similar to the projection approach that we use in Theorem \ref{thm:bestL2}, let the singular-value decomposition of $\hat{\vo{U}}$ be $\vo{M}_1\vo{D}\vo{M}^T_2$, i.e.,
\begin{align}
\vo{M}_1\vo{D}\vo{M}_2^T = \L^{-1}\EE_1^T\EE_2^{-1}\W_1^{1/2},
\eqnlabel{sc:svd}
\end{align}
where $\vo{D}=\begin{bmatrix}\vo{\Lambda} & \vo{0}_{n\times(N-n)}\end{bmatrix}$, and $\vo{\Lambda}$ is diagonal matrix with singular values of $\hat{\vo{U}}$. Defining $\vo{U}^\ast := \vo{M}_1\vo{T}\vo{M}^T_2$  with $\vo{T}:=\begin{bmatrix}\I{n} & \vo{0}_{n\times(N-n)}\end{bmatrix}$,
ensures $\vo{U}^\ast\vo{U}^{\ast T} = \I{n}$. 
\end{proof}
\fig{conSC} compares moment errors from the proposed constrained $l_2$ formulation with those of SC and LS methods. As expected, the constrained $l_2$ approach yields zero error in the first moment across all approximation orders, while SC and LS exhibit significant errors at low orders. For the polynomial case in subplot (a), these errors vanish only at the $5^{\text{th}}$ and $9^{\text{th}}$ orders for SC and LS, respectively.

Second-moment errors from SC and LS are comparable and remain large for non-polynomial functions, whereas the constrained $l_2$ method achieves machine-precision accuracy even at low orders. Thus, like the constrained $\mathcal{L}_2$ formulation, it enables construction of low-order surrogates that exactly capture the first two moments. Errors in higher moments remain comparable to SC and LS, consistent with the imposed constraints.

\section{Conclusions} \label{sec:conclusions}
Standard Galerkin projection (GP) and least-squares (LS) methods often require high-order polynomial chaos (PC) expansions to achieve accurate moment estimates. In this work, we introduced constrained variants of these methods, namely, the constrained $\mathcal{L}_2$ and constrained $l_2$ formulations, which enforce exact matching of the first two moments. 
These formulations enable construction of lower-order PC surrogates that remain statistically accurate while being computationally more efficient. Given the widespread use of PC expansions for uncertainty propagation and quantification in control, the constrained approaches provide a promising pathway for generating low-order surrogate models. Application of these formulations to control problems will be pursued in future work.

\bibliographystyle{unsrt} 
\bibliography{references}
\end{document}